\newcommand{\ignore}[1]{}
\newcommand{\enote}[1]{} \newcommand{\knote}[1]{}
\newcommand{\rnote}[1]{}
\renewcommand{\P}[1]{{\mathbb{P}}\left[{#1}\right]}
\newcommand{\E}[1]{{\mathbb{E}}\left[{#1}\right]}
\newcommand{\Ex}[2]{{\mathbb{E}}_{#1}\left[{#2}\right]}
\renewcommand{\phi}{\varphi}
\newcommand{\copyableTheorem}[2]{
 \newtheorem*{newthm#1}{Theorem \ref{thm#1}}
 \begin{newthm#1}
   {#2}
 \end{newthm#1}
 \expandafter\newcommand\expandafter{\csname thm#1\endcsname}{
   \begin{theorem}
     \label{thm#1}
     {#2}
   \end{theorem}
 }
}
\newcommand{\uup}[2]{\mathbb{R}_{#1}\left[{#2}\right]}
\newcommand{\geo}[1]{\mathbb{G}\left[{#1}\right]}
\newcommand{\vmax}{V_{\max}}
\newenvironment{keywords}{
       \list{}{\advance\topsep by0.35cm\relax\small
       \leftmargin=1cm
       \labelwidth=0.35cm
       \listparindent=0.35cm
       \itemindent\listparindent
       \rightmargin\leftmargin}\item[\hskip\labelsep
                                     \bfseries Keywords:]}
\begin{document}
\title{Lower Bounds on Revenue of Approximately Optimal Auctions}


\author{
Balasubramanian Sivan\inst{1}\thanks{This work was done while the author was an intern
at Microsoft Research.}
\and
Vasilis Syrgkanis\inst{2} \thanks{Part of this work was done while the author was an intern
at Microsoft Research. Supported in part by ONR grant N00014-98-1-0589 and a Simons Graduate Fellowship.}
\and
Omer Tamuz\inst{3}\thanks{Part of this work was done while the author
  was a visitor at Microsoft Research. Supported in part by a Google
  Europe fellowship in Social Computing.}
}
\institute{
Computer Sciences Dept., University of Winsconsin-Madison\\ \email{balu2901@cs.wisc.edu}
\and
Dept. of Computer Science, Cornell University, Ithaca, NY, USA\\ \email{vasilis@cs.cornell.edu}
\and
Weizmann Institute, Rehovot 76100, Israel \\ \email{omert.tamuz@weizmann.ac.il}}
\maketitle
 
\begin{abstract}
We obtain revenue guarantees for the simple pricing mechanism
of a single posted price, in terms of a natural parameter of the
distribution of buyers' valuations. Our revenue guarantee applies to
the single item $n$ buyers setting, with values drawn from an
arbitrary joint distribution.  Specifically, we show that a single
price drawn from the distribution of the maximum
valuation $\vmax = \max\{V_1,V_2,\dots,V_n\}$ achieves a revenue 
of at least a $\frac{1}{e}$ fraction of the geometric expecation of
$\vmax$.  This generic bound is a measure of how revenue
improves/degrades as a function of the concentration/spread of
$\vmax$.

We further show that in absence of buyers' valuation distributions, recruiting
an additional set of identical bidders will yield a similar guarantee on
revenue. Finally, our bound also gives a measure of the extent to which one can
simultaneously approximate welfare and revenue in terms of the
concentration/spread of $\vmax$.

\end{abstract}

\begin{keywords}
Revenue, Auction, Geometric expectation, Single posted price
\end{keywords}

\section{Introduction}

 Here is a natural pricing problem: A
single item is to be sold to one among $n$ buyers.  Buyers' valuations
are drawn from some known joint distribution. How good a revenue can
be achieved by posting a single price for all the buyers, and giving
the item to the first buyer whose value exceeds the price? Can we
lower bound the revenue in terms of some properties of the
distribution?  Such a single pricing scheme is often the only option
available, for several natural reasons. In many situations, it is
illegal or not in good taste to price discriminate between buyers;
furthermore often it is not possible to implement a pricing scheme
with multiple prices.

We define the geometric expectation of a random variable before describing our
result: the geometric expectation of a random variable $X$ is given by
$e^{\E{\log(X)}}$ (see, e.g.,~\cite{paolella2006fundamental}). The geometric
expectation is always lower than the expectation, and the more concentrated the
distribution, the closer they are; indeed, the ratio between the geometric
expectation and the expectation is a natural measure of concentration around
the mean. We illustrate how the ratio of geometric and actual expectations
captures the spread of a random variable through an example in
Section~\ref{sec:mainTheorem}. 

\paragraph{Constant fraction of geometric expectation.} We show that a
single price obtains a $\frac{1}{e}$ fraction of the geometric expectation of
the maximum among the $n$ valuations $(V_1,\ldots,V_n)$, i.e. geometric
expectation of $\vmax=\max\{V_1,\ldots,V_n\}$. Thus for distributions that are
concentrated enough to have a geometric expectation of $\vmax$ that is close to
the expectation of $\vmax$, a single pricing scheme extracts a good fraction of
the social surplus. In particular, when the ratio of geometric and actual
expectations is larger than $e/4$, our revenue guarantee is larger than a
$1/4$ fraction of the welfare (and hence the optimal revenue), thus
beating the currently best known bound of $1/4$ by Hartline and
Roughgarden~\cite{HR09}. In the special case when the distribution of $\vmax$
satisfies the monotone hazard rate (MHR) property, a single price can extract a
$\frac{1}{e}$ fraction of the expected value of $\vmax$ (\cite{DRY10}).
However, since several natural distributions fail to satisfy the MHR property, 
establishing a generic revenue guarantee in terms of the
geometric expectation, and then bounding the ratio of the geometric and actual
expectation is a useful route. For instance, in Section~\ref{sec:mainTheorem} we compute
this ratio for power law distributions (which do not satisfy the MHR property) 
and show that for all exponents $m\geq 1.56$ this ratio is larger than $e/4$ thus
beating the currently known bound.

\paragraph{Why geometric expectation?}
\begin{enumerate} 
\item Since the concentration of a distribution is a crucial
property in determining what fraction of welfare (expectation of $\vmax$) can
be extracted as revenue, it is natural to develop revenue guarantees expressed
in terms of some measure of concentration.  
\item While there are several
useful measures of concentration for different contexts, in this work we
suggest that for revenue in auctions the ratio of the geometric and actual
expectations is both a generic and a useful measure --- as explained in the
previous paragraph, for some distributions our revenue guarantees are the best
known so far.  
\item The ratio of the two expectations is a
dimensionless quantity (i.e., scale free).  
\end{enumerate}

\paragraph{Second price auction with an anonymous reserve price.}
A natural corollary of the lower bound on single pricing scheme's
revenue is that the second price auction (or the Vickrey auction) with
a single anonymous reserve obtains a fraction $\frac{1}{e}$ of the
geometric expectation of $\vmax$. When buyers' distributions are
independent and satisfy a technical regularity condition, Hartline and
Roughgarden~\cite{HR09} show that the second price auction with a
single anonymous reserve price obtains a four approximation to the
optimal revenue obtainable. Here again, our result shows that for more
general settings, where bidders values could be arbitrarily
correlated, Vickrey auction with a single anonymous reserve price
guarantees a $\frac{1}{e}$ fraction of geometric expectation of
$\vmax$.

\paragraph{Second price auction with additional bidders.}
When estimating the distribution is not feasible (and hence computing
the reserve price is not feasible), a natural substitute is to recruit
extra bidders to participate in the auction to increase
competition. We show that if we recruit another set of bidders
distributed identically to the first set of $n$ bidders, and run the
second price auction on the $2n$ bidders, the expected revenue is at
least a $\frac{2}{e}$ fraction of the geometric expectation of
$\vmax$. As in the previous result, for the special case of
independent distributions that satisfy the regularity condition,
Hartline and Roughgarden~\cite{HR09} show that recruiting another set
of $n$ bidders identical to the given $n$ bidders obtains at least
half of the optimal revenue; our result gives a generic lower bound
for arbitrary joint distributions.

In the course of proving this result we also prove the following
result: in the single pricing scheme result, the optimal single price
to choose is clearly the monopoly price of the distribution of
$\vmax$. However we show that a random price drawn from the
distribution of $\vmax$ also achieves a $\frac{1}{e}$ fraction of
geometric expectation of $\vmax$.

\subsubsection{Related Work.} For the special single buyer case, Tamuz~\cite{T12}
showed that the monopoly price obtains a constant fraction of the geometric
expectation of the buyer's value. We primarily extend this result by showing that 
for the $n$ buyer setting, apart from the monopoly reserve price of $\vmax$,
a random price drawn from the distribution of $\vmax$ also gives a $\frac{1}{e}$ fraction of 
geometric expectation of $\vmax$. This is important for showing our result
by recruiting extra bidders. Daskalakis and Pierrakos~\cite{DP11} study simultaneous
approximations to welfare and revenue for settings with independent distributions that
satisfy the technical regularity condition. They show that Vickrey auction with non-anonymous
reserve prices achieves a $\frac{1}{5}$ of the optimal revenue and welfare in such settings. 
Here again, for more general settings with arbitrarily correlated values, 
our result gives a measure how the quality of such simultaneous approximations degrades with the spread of $\vmax$. 
The work of Hartline and Roughgarden~\cite{HR09} on second price auction with anonymous reserve
price / extra bidders has been discussed already.

\section{Definitions and Main Theorem}\label{sec:mainTheorem}
Consider the standard auction-theoretic problem of selling a single item among $n$ buyers. 
Each buyer  $i$ has a private (non-negative) valuation $V_i$ for receiving the item.  Buyers
are risk neutral with utility $u_i = V_ix_i - p_i$, where $x_i$ is the probability of buyer $i$ 
getting the item and $p_i$ is the price he pays. The valuation profile $(V_1,V_2,\dots,V_n)$ 
of the buyers is drawn from some arbitrary joint distribution that is known to the auctioneer. 
Let $V_{\max}=\max_i V_i$ be the random variable that
denotes the maximum value among the $n$ bidders. We denote with
$F_{\max}$ the cumulative density function of the distribution of
$V_{\max}$.

\begin{definition}
For a positive random variable $X$, the geometric expectation $\geo{X}$ 
is defined as:
$$\geo{X}=\exp(\E{\log{X}})$$
\end{definition}

We note that by Jensen's inequality $\geo{X}\leq \E{X}$ and that
equality is achieved only when $X$ is a deterministic random variable. 
Further, as noted in the introduction, the ratio of geometric and actual expectations
of a random variable is a useful measure of concentration around the mean. 
We illustrate this point through an example. 
\begin{example}
Consider the family $F_m(x) = 1-1/x^m$ of power-law
distributions for $m\geq 1$. As $m$ increases the tail of the distribution decays
faster, and thus we expect the geometric expectation to be closer to the
actual expectation.  Indeed, the geometric expectation of such a random
variable can be computed to be $e^{1/m}$ and the actual expectation to be
$\frac{m}{m-1}$.  The ratio $e^{1/m}(1-1/m)$ is an increasing function of $m$.
It reaches $1$ at $m=\infty$, i.e., when the distribution becomes a point-mass
fully concentrated at $1$. 
The special case of $m=1$ gives the equal-revenue distribution, where the
geometric expectation equals $e$ and the actual expectation is infinity.
However this infinite gap (or the zero ratio) quickly vanishes as $m$ grows; at $m=1.56$,
the ratio already crosses $e/4$ thus making our revenue guarantee better than
the current best $1/4$ of optimal revenue; at $m=4$, the ratio
already equals $0.963$.
\end{example}

For a random variable $X$ drawn from distribution $F$, define $\uup{p}{X}$ as:
$$\uup{p}{X}=p \P{X\geq p} \geq p \P{X>p}=p(1-F(p))$$
If $X$ is the valuation of a buyer, $\uup{p}{X}$ is the expected revenue obtained
by posting a price of $p$ for this buyer. Therefore $\uup{p}{V_{\max}}$ is the revenue 
of a pricing scheme that posts a single price $p$ for $n$ buyers with values $V_1,\dots,V_n$
and $\vmax = \max\{V_1,\dots,V_n\}$. 

We show that the revenue of a posted price mechanism with a single price drawn randomly
from the distribution of $\vmax$, achieves a revenue that is at least a $\frac{1}{e}$ 
fraction of the geometric expectation of $V_{\max}$, or equivalently a $\frac{1}{e}$ fraction of the 
geometric expectation of the social surplus.  

\begin{theorem}[Main Theorem]
Let $r$ be a random price drawn from the distribution of $V_{\max}$. Then: 
\begin{equation}
\Ex{r}{\uup{r}{V_{\max}}}\geq \frac{1}{e}\geo{V_{\max}}.
\end{equation}
\end{theorem}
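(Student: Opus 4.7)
The plan is to apply the arithmetic--geometric mean form of Jensen's inequality directly to the random product $r \cdot \CondP{\vmax \geq r}{r}$, after first lower-bounding the survival probability by a convenient uniform random variable via the quantile transform. Concretely, I will pick $U \sim \mathrm{Unif}[0,1]$ independent of $\vmax$ and set $r := F_{\max}^{-1}(U)$, where $F_{\max}^{-1}(u) = \inf\{v : F_{\max}(v) \geq u\}$ is the generalized inverse; then $r$ has distribution $F_{\max}$ as required.

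The one nontrivial step is the coupling inequality $\CondP{\vmax \geq r}{U} \geq 1-U$ almost surely. This holds because $F_{\max}(v) < U$ for every $v < F_{\max}^{-1}(U)$ by the definition of the infimum, so taking the left limit yields $F_{\max}(r^-) \leq U$, hence $\CondP{\vmax \geq r}{U} = 1 - F_{\max}(r^-) \geq 1-U$. I expect this to be the only subtle point in the argument: if instead one replaced $\P{\vmax \geq r}$ by $1-F_{\max}(r)$, the bound would collapse at atoms of $F_{\max}$ (e.g., when $\vmax$ is deterministic), whereas the quantile coupling always keeps the survival probability at least $1-U$.

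Once this is in hand, we have $\uup{r}{\vmax} \geq r(1-U) = F_{\max}^{-1}(U)(1-U)$. Apply Jensen's inequality in the form $\E{X} \geq \exp(\E{\log X})$ to the positive random variable on the right:
\begin{equation*}
\Ex{r}{\uup{r}{\vmax}} \;\geq\; \exp\Big(\E{\log F_{\max}^{-1}(U)} + \E{\log(1-U)}\Big).
\end{equation*}
Now $\E{\log F_{\max}^{-1}(U)} = \E{\log \vmax} = \log \geo{\vmax}$ by the definition of geometric expectation, and $\E{\log(1-U)} = \int_0^1 \log u\, du = -1$ since $1-U$ is uniform on $[0,1]$. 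Exponentiating yields $\Ex{r}{\uup{r}{\vmax}} \geq \geo{\vmax}/e$, which is the claimed bound.
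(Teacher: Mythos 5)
Your proof is correct, and it follows the same skeleton as the paper's argument: lower-bound the revenue at a random price by a product of the price and a survival probability, take logarithms, apply Jensen's inequality, and use the unit integral $\int_0^1 \log(1-y)\,dy = -1$ to absorb the survival term into a factor of $1/e$. The genuine difference is your quantile coupling $r = F_{\max}^{-1}(U)$ with the bound $\CondP{\vmax \geq r}{U} \geq 1-U$. The paper instead writes $\uup{r}{\vmax} \geq r(1-F_{\max}(r))$ and then asserts $\E{\log(1-F_{\max}(r))} = \int_0^1 \log(1-y)\,dy = -1$, which silently uses that $F_{\max}(r)$ is uniform on $[0,1]$ --- true only when $F_{\max}$ is continuous. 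At an atom of $F_{\max}$ (your deterministic example is the cleanest case) the paper's intermediate quantity $1-F_{\max}(r)$ can vanish with positive probability, making $\E{\log(1-F_{\max}(r))} = -\infty$ and the chain of inequalities vacuous, even though the theorem itself remains true and in fact tight there. Your version, which keeps the survival probability as $1-F_{\max}(r^-) \geq 1-U$ and pushes the uniform variable $U$ through the computation instead of $F_{\max}(r)$, closes this gap and proves the statement for arbitrary distributions of $\vmax$. So what your route buys is generality (and rigor at atoms); what the paper's route buys is brevity, at the cost of an unstated continuity assumption.
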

\begin{proof}
By the definition of $\uup{r}{V}$ we have:
\begin{equation}
\Ex{r}{\uup{r}{V_{\max}}}\geq\Ex{r}{r~(1-F_{\max}(r))}.
\end{equation}
By taking logs on both the of the above equation, and using Jensen's inequality we get:
\begin{align*}
\log(\Ex{r}{\uup{r}{V_{\max}}})\geq~&\log\left(\Ex{r}{r (1-F_{\max}(r))}\right)\\
\geq~& \Ex{r}{\log(r(1-F_{\max}(r)))} \\
=~& \Ex{r}{\log(r)}+\Ex{r}{\log(1-F_{\max}(r))}.
\end{align*}
For any positive random variable $X$ drawn from a
distribution $F$ we have:
\begin{equation}
\E{\log(1-F(X))}=\int_{-\infty}^{\infty}\log(1-F(x))dF(x)=\int_{0}^{1}\log(1-y)dy=-1.
\end{equation}
So we have:
\begin{align*}
\log(\Ex{r}{\uup{r}{V_{\max}}})&\geq \Ex{r}{\log(r)}-1\\
\Ex{r}{\uup{r}{V_{\max}}}&\geq \frac{1}{e}\exp(\Ex{r}{\log(r)}=\frac{1}{e}\geo{V_{\max}}.
\end{align*}
where the last equality follows from the fact that the random reserve $r$ is drawn from $F_{\max}$. \qed
\end{proof}

Since a random price drawn from $F_{\max}$ achieves this revenue, it follows that there 
exists a deterministic price that achieves this revenue and hence the best 
deterministic price will achieve the same. 

We define the monopoly price $\eta_F$ of a distribution $F$ to be the optimal posted price in a single buyer setting when the buyer's
valuation is drawn from distribution $F$, i.e.:
\begin{equation*}
\eta_F = \arg\sup_{r} r(1-F(r))
\end{equation*}
So a direct corollary of our main theorem is the following:
\begin{corollary}
Let $\eta_{\max}$ be the monopoly price of distribution $F_{\max}$. Then:
\begin{equation*}
\uup{\eta_{\max}}{V_{\max}}\geq \frac{1}{e}\geo{V_{\max}}
\end{equation*}
\end{corollary}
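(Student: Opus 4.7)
The plan is to derive the corollary as an almost immediate consequence of the Main Theorem via a standard averaging (probabilistic method) argument. By the Main Theorem, the random-price mechanism satisfies $\Ex{r}{\uup{r}{V_{\max}}}\geq \frac{1}{e}\geo{V_{\max}}$, where $r$ is drawn from $F_{\max}$. Since this is an expectation over $r$, there must exist at least one realization $r^\ast$ in the support of $F_{\max}$ for which $\uup{r^\ast}{V_{\max}}\geq \frac{1}{e}\geo{V_{\max}}$.

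Next I would invoke the definition of the monopoly price $\eta_{\max}=\arg\sup_{r} r(1-F_{\max}(r))$, which by construction maximizes $\uup{r}{V_{\max}}$ over all deterministic prices $r$. In particular, $\uup{\eta_{\max}}{V_{\max}}\geq \uup{r^\ast}{V_{\max}}$, chaining the two inequalities gives
\begin{equation*}
\uup{\eta_{\max}}{V_{\max}} \;\geq\; \uup{r^\ast}{V_{\max}} \;\geq\; \frac{1}{e}\geo{V_{\max}},
\end{equation*}
which is precisely the desired bound.

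There is essentially no obstacle here: the only mild subtlety is that $\sup$ might not be attained, in which case one works with an $\eps$-approximate maximizer and lets $\eps\to 0$, or equivalently observes that the inequality $\sup_{r} \uup{r}{V_{\max}} \geq \Ex{r}{\uup{r}{V_{\max}}}$ holds regardless of attainment. Thus the corollary follows with no additional technical work beyond the Main Theorem.
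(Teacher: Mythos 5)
Your proof is correct and follows exactly the paper's own reasoning: the paper likewise observes that since a random price drawn from $F_{\max}$ achieves the bound in expectation, some deterministic price achieves it, and hence the best deterministic price (the monopoly price) does as well. Your added remark about the supremum possibly not being attained is a reasonable extra precaution that the paper glosses over.
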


\section{Applications to Approximations in Mechanisms Design}
\paragraph{Single Reserve Mechanisms for Non-iid Irregular Settings.}
A corollary of our main theorem is that in a second price auction with
a single anonymous reserve, namely a reserve drawn randomly from the distribution of $F_{\max}$ or 
a deterministic reserve of the monopoly price of $F_{\max}$, 
will achieve revenue that is a constant
approximation to the geometric expectation of the maximum value. When
the maximum value distribution is concentrated enough to have 
the geometric expectation is close to expectation it
immediately follows that an anonymous reserve mechanism's revenue is
close to that of the expected social surplus and hence the expected optimal revenue.
\begin{corollary}
  The second price auction with a single anonymous reserve achieves
  a revenue of at least $\frac{1}{e}\geo{V_{\max}}$ for arbitrarily
  correlated bidder valuations.
\end{corollary}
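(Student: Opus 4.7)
The plan is to reduce the corollary directly to the Main Theorem (or its deterministic corollary) by a simple revenue-domination argument between a second price auction with reserve and a posted price mechanism.

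First I would formalize the following comparison: fix any reserve price $r$ (either the deterministic monopoly reserve $\eta_{\max}$ of $F_{\max}$, or a random reserve drawn from $F_{\max}$). In a second price auction with anonymous reserve $r$, the item is sold whenever $V_{\max} \geq r$, and in that case the payment by the winner is at least $r$ (it equals $r$ if only one bidder exceeds $r$, and equals the second-highest bid, which is $\geq r$, otherwise). When $V_{\max} < r$ the item is unsold and revenue is $0$. Therefore, conditionally on any realization of $(V_1, \dots, V_n)$ and of $r$, the revenue of the second price auction with reserve $r$ is at least $r \cdot \ind{V_{\max} \geq r}$.

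Taking expectation over the valuations while conditioning on $r$ yields a revenue of at least $r \cdot \P{V_{\max} \geq r} = \uup{r}{V_{\max}}$. Then, averaging also over a random $r \sim F_{\max}$, the expected revenue of the second price auction with anonymous reserve $r$ is at least $\Ex{r}{\uup{r}{V_{\max}}}$, which by the Main Theorem is at least $\frac{1}{e}\geo{V_{\max}}$. Equivalently, fixing the deterministic reserve $\eta_{\max}$ and applying the corollary of the Main Theorem gives the same bound $\uup{\eta_{\max}}{V_{\max}} \geq \frac{1}{e}\geo{V_{\max}}$.

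There is essentially no obstacle here: the only nontrivial step is the observation that the second price mechanism with reserve $r$ revenue-dominates the posted price mechanism at price $r$ pointwise on valuation profiles, and this holds without any independence or regularity assumption on the joint distribution of $(V_1, \dots, V_n)$. Hence the bound transfers to arbitrarily correlated valuations, which is the content of the corollary.
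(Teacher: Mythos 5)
Your proposal is correct and follows the same route the paper intends: the paper presents this as a direct corollary of the Main Theorem, with the (implicit) justification being exactly your pointwise revenue-domination observation that whenever $V_{\max}\geq r$ the second price auction with reserve $r$ sells and collects at least $r$, so its revenue is at least $\uup{r}{V_{\max}}$. You have merely made explicit a step the paper leaves informal, and your remark that no independence or regularity is needed matches the paper's claim for arbitrarily correlated valuations.
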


\paragraph{Approximation via replicating buyers in Irregular
  Settings.}
When the auctioneer is unable to estimate the distribution of $\vmax$, and therefore
unable to compute the reserve price, a well known alternative~\cite{BK96}
to achieve good revenue is to recruit additional bidders to participate
in the auction to increase competition. 
In our setting, recruiting a set of $n$ bidders
distributed identically as the initial set of $n$ bidders (i.e. following joint distribution $F$)
will simulate having a reserve drawn randomly from 
$F_{\max}$. In fact it performs even better than having a reserve --- one among 
the additionally recruited agents could be the winner and he pays the auctioneer, as against the reserve 
price setting. More formally, observe that in the setting with $2n$ bidders, half of
the revenue is achieved from the original $n$
bidders, and half from the new bidders (by symmetry). But the revenue from 
each of these parts is exactly that of the second price auction with a 
random reserve drawn from the distribution of $V_{\max}$. Hence, the
revenue of this extended second price mechanism will be twice the
revenue of a second price mechanism with a single random reserve 
drawn from the distribution of $V_{\max}$. This fact, coupled with our main theorem gives us
the following corollary.
\begin{corollary}
  The revenue of a second price auction with an additional set of bidders drawn from joint distribution $F$ is at
  least $\frac{2}{e}\geo{V_{\max}}$.
\end{corollary}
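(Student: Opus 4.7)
The plan is to exploit the symmetry between the original group of $n$ bidders and the replicated group, and to recognize that the maximum value in one group plays the role of a random reserve drawn from $F_{\max}$ for the second price auction on the other group. I would let $V_{\max}^{(1)}$ and $V_{\max}^{(2)}$ denote the maximum valuations inside the original and the replicated groups respectively; since both groups are sampled from the same joint distribution $F$, these are i.i.d.\ with distribution $F_{\max}$. I would then split the expected revenue of the $2n$-bidder second price auction into two pieces according to which group contains the winning bidder; by symmetry of the two groups these two contributions have equal expectation, so it suffices to lower bound one of them by $\frac{1}{e}\geo{V_{\max}}$ and multiply by two.

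For the bound on a single contribution, condition on the event $\{V_{\max}^{(1)} > V_{\max}^{(2)}\}$ that the winner comes from the first group. Since $V_{\max}^{(2)}$ is among the submitted bids and is strictly less than the winning bid $V_{\max}^{(1)}$, the second price paid by the winner is at least $V_{\max}^{(2)}$. Hence the expected revenue restricted to this event satisfies
\[
\E{V_{\max}^{(2)} \cdot \ind{V_{\max}^{(1)} > V_{\max}^{(2)}}} = \Ex{r}{r\,(1 - F_{\max}(r))},
\]
where on the right $r$ is drawn from $F_{\max}$, using independence of the two groups together with the identity $\P{V_{\max}^{(1)} > r}=1-F_{\max}(r)$. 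The right-hand side is precisely the intermediate quantity that was bounded below by $\frac{1}{e}\geo{V_{\max}}$ in the proof of the main theorem, via Jensen's inequality and the identity $\E{\log(1-F_{\max}(r))}=-1$.

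Multiplying the two symmetric contributions by two yields the target bound $\frac{2}{e}\geo{V_{\max}}$. I do not expect any substantive obstacle: the only non-mechanical observation is that when the winner lies in one group, the maximum valuation in the other group acts as a lower bound on the second price, which is immediate from the definition of the second price auction. Once this reduction is in place, the rest follows by invoking the already-proven main theorem.
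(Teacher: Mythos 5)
Your proposal is correct and follows essentially the same route as the paper: a symmetry argument splitting the revenue of the $2n$-bidder auction into the two groups' contributions, the observation that the other group's maximum acts as a random reserve drawn from $F_{\max}$, and an appeal to the bound $\Ex{r}{r(1-F_{\max}(r))}\geq \frac{1}{e}\geo{V_{\max}}$ from the Main Theorem. Your write-up is in fact more explicit than the paper's (which simply asserts that each half of the revenue equals that of a second price auction with a random reserve from $F_{\max}$), but the decomposition and the key lemma are identical.
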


\paragraph{Approximately Optimal and Efficient Mechanisms.} Finally, we
note that when the geometric expectation of $V_{\max}$ is close to its
expectation, all our mechanisms (both the single pricing scheme, and Vickrey with a single reserve) are also approximately efficient.
\begin{corollary}
If $\geo{V_{\max}}=c \E{V_{\max}}$, a single price drawn randomly from the distribution of $F_{\max}$ 
is simultaneously $\frac{c}{e}$ approximately revenue-optimal and $\frac{c}{e}$ approximately efficient.
\end{corollary}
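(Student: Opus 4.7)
The plan is to reduce the corollary to two easy comparisons once the Main Theorem is invoked: one comparing mechanism revenue to optimal revenue, and one comparing mechanism welfare to optimal welfare. The substitution $\geo{V_{\max}}=c\,\E{V_{\max}}$ then converts the $1/e$ bound into a $c/e$ bound in both cases.

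First, I would apply the Main Theorem directly: if $r$ is sampled from $F_{\max}$, then $\Ex{r}{\uup{r}{V_{\max}}}\geq \frac{1}{e}\geo{V_{\max}} = \frac{c}{e}\E{V_{\max}}$. For the revenue approximation, I would then note the standard fact that the optimal expected revenue of any individually rational mechanism is upper bounded by the optimal expected welfare, which equals $\E{V_{\max}}$ (achieved by always allocating to the highest bidder). Dividing through, the random posted-price mechanism extracts at least a $c/e$ fraction of $\Opt$.

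For the efficiency approximation, I would exploit the observation that in a posted-price mechanism, whenever a sale occurs, the item is bought by a buyer whose value is at least the posted price $r$; hence the realized welfare pointwise dominates the realized revenue. Therefore the expected welfare of the mechanism is at least $\Ex{r}{\uup{r}{V_{\max}}}\geq \frac{c}{e}\E{V_{\max}}$, while the optimal welfare is exactly $\E{V_{\max}}$, yielding the $c/e$ efficiency guarantee. Combining the two bounds, the same mechanism simultaneously achieves both approximations, as claimed.

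I do not anticipate any genuine obstacle here; the only care-requiring point is the welfare comparison, where one must be explicit that a random reserve drawn from $F_{\max}$ allocates to the maximum-value bidder conditional on sale (so that welfare equals $\E{V_{\max}\cdot \ind{V_{\max}\geq r}}\geq \Ex{r}{r\,(1-F_{\max}(r))}$). Once this is pointed out, the corollary follows by a two-line chain of inequalities from the Main Theorem and the substitution $\geo{V_{\max}}=c\,\E{V_{\max}}$.
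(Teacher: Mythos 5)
Your proposal is correct and follows essentially the same route as the paper: the paper's proof is exactly the chain $\E{\text{Social Welfare}}\geq \E{\text{Revenue}}\geq \frac{1}{e}\geo{V_{\max}} = \frac{c}{e}\E{V_{\max}}$, combined with the (implicit) facts that optimal welfare equals $\E{V_{\max}}$ and optimal revenue is at most $\E{V_{\max}}$. You merely spell out those implicit comparisons more explicitly, which is fine.
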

\begin{proof}
Since expected social welfare of a pricing scheme is at least its expected revenue, we have: 
$$\E{\text{Social Welfare}}\geq \E{\text{Revenue}}\geq \frac{1}{e}\geo{V_{\max}}\geq\frac{c}{e}\E{V_{\max}}$$
\qed
\end{proof}

\bibliographystyle{abbrv} 
\bibliography{ge}

\end{document}